\newcommand\mut[1]{\ignorespaces}
\newtheorem{thm}{Theorem}
\newtheorem{lem}[thm]{Lemma}
\newtheorem{remark}[thm]{Remark}
\newtheorem{cor}[thm]{Corollary}
\newtheorem{exmp}[thm]{Example}
\newtheorem{defi}[thm]{Definition}
\newtheorem{pro}[thm]{Proposition}
\newtheorem{prob}[thm]{Open problem}
\def\gcd{\mathrm{gcd}}
\def\C{\mathbb{C}}
\def\F{\mathbb{F}}
\def\K{\mathbb{K}}
\def\x{\underline{x}}
\def\h{\underline{h}}
\renewcommand\paragraph[1]{\subsection*{#1}}
\title{On some classes  of Irreducible polynomials}
\author{Jaime Gutierrez}
\address{Departamento de Matemática Aplicada y Ciencias de la Computación\\
Universidad de Cantabria\\
Santander, Spain}
\email{jaime.gutierrez@unican.es}	
\author{Jorge Jiménez Urroz}
\address{Departamento de Matemáticas \\
Universitat Politècnica Catalunya\\
Barcelona, Spain}
\email{jorge.urroz@upc.edu}
\keywords{Irreducible Polynomials, Eisenstein, stable polynomials}
\begin{document}


\begin{abstract}

One of the fundamental tasks of Symbolic Computation is the factorization of polynomials into irreducible factors.
\

The aim of the paper is  to produce new families of irreducible polynomials, generalizing previous results in the area. One example of our general result is that for a near-separated polynomial, i.e., polynomials  of the form $F(x,y)=f_1(x)f_2(y)-f_2(x)f_1(y)$, then $F(x,y)+r$ is always irreducible for any constant $r$ different from  zero.  

\

  We also provide the biggest known family of HIP polynomials in several variables. These are  polynomials
 $p(x_1,\ldots,x_n) \in \K[x_1,\ldots,x_n]$  over a zero characteristic  field $\K$ such that $p(h_1(x_1),\ldots,h_n(x_n))$ is irreducible over $\K$ for every $n$-tuple $h_1(x_1),\ldots,h_n(x_n)$ of non constant one variable polynomials over $\K$. The results can also be applied to fields of positive characteristic, with some modifications.
 \end{abstract}

\maketitle

\section{Introduction}

Let  $\K[x_1,\ldots,x_n]$ be the polynomial ring in $n$ variables $(x_1,\ldots,x_n)=\x$ over a  field $\K$. A polynomial $p(x_1,\ldots,x_n)=p(\x) \in \K[x_1,\ldots,x_n]=\K[\x]$ is called   a Hereditarily  Irreducible Polynomial (HIP) if  $p(h_1(x_1),\ldots,h_n(x_n))$ is irreducible in $\K[\x]$ for every $n$-tuple $h_1(x_1),\ldots,h_n(x_n)$ of non constant one variable polynomials over $\K$ , see \cite{A, RST}.
In  \cite{RST}  the authors present a class of HIP polynomials  only in two variables $x$ and $y$ over the complex number field $\C$, namely the polynomials
$p(x,y)=a(x)y+1$ such that $a(x)$ is a square free polynomial of degree at least two.
Later,   \cite{Ayad} provided an extension of this class, namely polynomials  over a zero characteristic field $\K$ of the form:
$p(x,y)=a(x)y+b(x)$  such that $a(x)$ has at least two simple roots and  $(a(x),b(x))=1$.

One of the main task of this paper is  to produce new families  of HIP.  In particular, we extend  in Section  \ref{sec:hip} the
class of the polynomials in \cite{Ayad} to  primitive polynomials  of the form $p(x,y)=a(x)c(x,y)+b(x)\in \K[x,y]$ 
such that   $c(x,y)=\sum_{i=0}^{d}c_i(x)y^i \in \K[x][y]$, $\gcd(a(x),c_d(x)b(x))=1$ and 
$a(x)$ has at least two simple roots. 

\

There are several results on the so called  difference or separated  polynomials, which are polynomials in two variables $x,y$ of the form  $P(x,y)=p(x)-p(y)$, (see the excellent  book \cite{selecta}).  The authors of  \cite{jorge} showed  that $P(x,y)+r$ is an irreducible polynomial, where $r$ is a nonzero element of a zero characteristic field $\K$. In a similar way  it is defined a near-separated polynomial as one of the form $F(x,y)=f_1(x)f_2(y)-f_2(x)f_1(y)$,  see for instance \cite{cesar,josef}.  In the last section of this paper we provide a slight generalization of the celebrated Eisenstein criteria and, with it,  we can show  that $F(x,y)+r$  is  absolutely irreducible for any  constant $r\ne 0$.   Our criterion also allows us to address other problems on irreducibility. In particular, we are able to simplify and generalize some results of \cite{mignotte}.

\

Through this paper, we denote by $\K$  an arbitrary field and by $\overline{\K}$ its algebraic closure.
  A polynomial $p(\x)\in \K[\x]$ is absolutely irreducible if it is irreducible over $\overline{\K}[\x]$.

\section{Hereditarily  Irreducible Polynomial (HIP)}
\label{sec:hip}

We start this section with a simple application of the well known Eisenstein's criterion in \cite{Ein}.

\begin{pro}\label{jorgeproposition} Let   
$a(\x)\in \K[\x]$ be with an irreducible factor of multiplicity 1. Then any  primitive polynomial $p(\x,y)=a(\x)c(\x,y)+b(\x) \in \K[\x][y] $ is absolutely irreducible, for any pair of polynomials  $c(\x,y)=\sum_{i=0}^{d}c_i(\x)y^i\in \K[\x,y]$ with  $b(\x) \in \K[\x]$ and  $\gcd(a(\x), c_d(\x)b(\x))=1$.
\end{pro}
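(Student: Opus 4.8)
The plan is to apply Eisenstein's criterion at the prime given by the multiplicity-one irreducible factor of $a$, but in its reversed form, and then descend from the fraction field to $\K[\x,y]$ by Gauss's lemma; the whole argument is carried out over $\overline{\K}$ so as to obtain absolute irreducibility. First I would expand $p$ as a polynomial in $y$ over the UFD $\K[\x]$: since $c(\x,y)=\sum_{i=0}^d c_i(\x)y^i$, the coefficient of $y^i$ in $p$ is $a(\x)c_i(\x)$ for $i\ge 1$, while the constant term is $a(\x)c_0(\x)+b(\x)$. Thus the leading coefficient $a c_d$ is divisible by every factor of $a$, whereas the hypothesis $\gcd(a,c_d b)=1$ will keep the constant term coprime to such factors. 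This is exactly the configuration for the reversed Eisenstein criterion, applied to $p^\ast(\x,y):=y^d p(\x,1/y)$, whose leading and constant coefficients in $y$ are interchanged.

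Next I would pass to $\overline{\K}$ and select a suitable prime. Since $a$ has an irreducible factor $q$ of multiplicity one over $\K$, write $a=q a'$ with $\gcd(q,a')=1$. In characteristic zero $q$ is separable, so over $\overline{\K}$ it splits into distinct irreducible factors; because the gcd is preserved under field extension, none of them divides $a'$, and hence each of them, call one $\bar q$, is an irreducible factor of $a$ of multiplicity exactly one in $\overline{\K}[\x]$. I would then verify the three conditions for $p^\ast$ at $\bar q$: (i) $\bar q\nmid a c_0+b$, which holds because $\bar q\mid a$ forces $a c_0+b\equiv b\pmod{\bar q}$ while $\bar q\nmid b$ follows from $\gcd(a,c_d b)=1$; (ii) $\bar q\mid a c_i$ for every $i\ge 1$, which is immediate from $\bar q\mid a$; and (iii) $\bar q^2\nmid a c_d$, which holds because $\bar q$ has multiplicity one in $a$ and $\bar q\nmid c_d$, again by the gcd hypothesis. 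Eisenstein's criterion then shows that $p^\ast$, and therefore $p$, is irreducible in $\overline{\K}(\x)[y]$.

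It remains to go from irreducibility over the fraction field $\overline{\K}(\x)$ to irreducibility in $\overline{\K}[\x,y]$. Here I would invoke Gauss's lemma for the UFD $\overline{\K}[\x]$: a primitive polynomial of $\overline{\K}[\x][y]$ is irreducible there if and only if it is irreducible over $\overline{\K}(\x)[y]$. The polynomial $p$ is primitive over $\K[\x]$ by assumption, and since the gcd of its coefficients is unchanged under the extension $\K\subseteq\overline{\K}$, it remains primitive over $\overline{\K}[\x]$. Any hypothetical factorization of $p$ in $\overline{\K}[\x,y]$ with both factors of positive $y$-degree is excluded by the fraction-field irreducibility, while a factor of $y$-degree zero would be a common divisor of all the coefficients, contradicting primitivity. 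Hence $p$ is absolutely irreducible.

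The main obstacle is step (iii), namely securing an irreducible factor of $a$ that still has multiplicity exactly one after base change to $\overline{\K}$: this is precisely where separability of $q$ is needed. In characteristic zero it is automatic, but in positive characteristic an irreducible $q$ may become a perfect power over $\overline{\K}$ (for instance $x^p-t$ over $\F_p(t)$, which equals $(x-t^{1/p})^p$), and absolute irreducibility can genuinely fail; this is the reason the statement requires the modifications alluded to in the introduction, the hypothesis then being read as the existence of a \emph{separable} irreducible factor of multiplicity one.
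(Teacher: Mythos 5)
Your proof is correct and follows essentially the same route as the paper: Eisenstein's criterion (in its reversed form, with the roles of leading and constant coefficients interchanged) applied at a multiplicity-one irreducible factor of $a(\x)$ over $\overline{\K}$, followed by Gauss's lemma to pass from $\overline{\K}(\x)[y]$ back to $\overline{\K}[\x,y]$ via primitivity. Your closing observation is a genuine added value: the paper's one-line claim that a multiplicity-one irreducible factor of $a$ over $\overline{\K}$ ``exists by hypothesis'' silently assumes that the factor stays multiplicity one after base change, which requires separability and can fail over imperfect fields of positive characteristic, exactly as your example $x^p-t$ over $\F_p(t)$ shows.
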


\begin{proof} 
We take $q(\x)$ in the algebraic closure of  $\K$  to be an irreducible factor of multiplicity 1 of the polynomial $a(\x)$, which exists by hypothesis.
Consider  $p(\x,y)=\sum_{i=0}^df_i(\x)y^i$ where $f_i \in \K[\x][y]$. To prove the proposition we will apply Eisenstein criterion for $q(\x)$ in $\bar \K[\x]$. Indeed, we can apply the criterion since  $p(\x,y) $ is primitive as polynomial in $\K[\x][y]$,  $q(\x)|f_i(\x)$ for $i=1,\dots, d$, $q(\x)\nmid f_0(\x)= c_0(\x)a(\x)+b(\x)$, since $\gcd(a(\x),b(\underline x))=1$ and $q(\x)^2\nmid f_d(\x)=a(\x)c_d(\x)$ since $\gcd(a(\x), c_d(\x))=1$  and $q(\x)$ is an irreducible factor of multiplicity 1. 
\end{proof}

In order to describe  classes of HIP polynomials and after the above result it is quite natural to introduce  the following concept:

\begin{defi}
A polynomial $a(\x)\in\K[x]$ 
is called a Near  Hereditarily  Irreducible Polynomial (NHIP) if 
  $a(h_1(x_1),\ldots,h_n(x_n))$ has an irreducible factor of multiplicity $1$, for every $n$-tuple $h_1(x_1),\ldots,h_n(x_n)$ of non constant one variable polynomials.
\end{defi}

We need the following result that appears in \cite{ZurGu}:
\begin{lem}\label{lem:zurgu} Let   
$a(\x), b(\x)$ be polynomials in $\K[\x]$  with $\gcd(a(\x),b(\x))=1$. Then for every $n$-tuple $h_1(x_1), \ldots, h_n(x_n)$ of non constant one variable polynomials over $\K$ we have
$\gcd(a(h_1(x_1),\ldots,h_n(x_n) ,b(h_1(x_1),\ldots,h_n(x_n))=1$. 
\end{lem}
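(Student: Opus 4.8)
The plan is to read the substitution as pulling back along a finite surjective self-map of affine space, and to exploit that such maps cannot manufacture a new common codimension-one component out of nothing. Throughout, write $\h=(h_1(x_1),\ldots,h_n(x_n))$, so that the claim is $\gcd(a(\h),b(\h))=1$.

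First I would reduce to the case $\K=\overline{\K}$. The gcd of two polynomials is stable under field extension (it can be read off from a field-independent computation, or from the fact that it generates the extension of the corresponding ideal), so $\gcd_{\K}(a,b)=1 \iff \gcd_{\overline{\K}}(a,b)=1$, and likewise for the substituted pair. Hence it suffices to prove the statement over $\overline{\K}$. Next, consider the morphism $\psi\colon \overline{\K}^{\,n}\to\overline{\K}^{\,n}$ given by $\psi(\x)=(h_1(x_1),\ldots,h_n(x_n))$, whose comorphism is exactly the substitution $a\mapsto a(\h)$. Since each $h_i$ is nonconstant, $x_i$ is integral over $\K[h_i(x_i)]$ (it is a root of $h_i(T)-h_i(x_i)$), so $\overline{\K}[\x]$ is a finite module over $R=\overline{\K}[h_1(x_1),\ldots,h_n(x_n)]\cong\overline{\K}[\y]$; thus $\psi$ is a finite morphism, and it is surjective because each $h_i\colon\overline{\K}\to\overline{\K}$ is surjective.

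The geometric heart of the argument is then short. Because $\gcd(a,b)=1$ in the UFD $\overline{\K}[\x]$, the polynomials $a,b$ share no irreducible factor, so their common zero set $Z=V(a)\cap V(b)$ has dimension at most $n-2$. The zero set of the substituted pair is $V(a(\h))\cap V(b(\h))=\psi^{-1}(Z)$. A finite morphism has finite fibers, so $\dim\psi^{-1}(Z)=\dim Z\le n-2$, and in particular $\psi^{-1}(Z)$ contains no $(n-1)$-dimensional component. If $a(\h)$ and $b(\h)$ shared an irreducible factor $q$, then $V(q)\subseteq\psi^{-1}(Z)$ would be an $(n-1)$-dimensional subset, a contradiction; hence $\gcd(a(\h),b(\h))=1$ over $\overline{\K}$, and by the reduction also over $\K$.

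The points that must be nailed down, and where I expect the real work to lie, are twofold: (i) the equivalence of ``$\gcd=1$'' with ``no common codimension-one component,'' which over $\overline{\K}$ rests on the Nullstellensatz and on height-one primes of $\overline{\K}[\x]$ being principal; and (ii) the statement that a finite morphism preserves the dimension of preimages, together with the verification that $\psi$ really is finite and surjective. For readers who prefer to avoid geometry, the same contradiction can be reached purely ring-theoretically: the substitution is an isomorphism of $\overline{\K}[\x]$ onto $R$ and $\overline{\K}[\x]$ is finite over $R$, so a common irreducible factor $q$ of $a(\h),b(\h)$ would give a height-one prime $(q)$ whose contraction $(q)\cap R$ is again height one (since $\dim \overline{\K}[\x]/(q)=\dim R/((q)\cap R)$ by finiteness), hence principal in the UFD $R$, which pulls back under $\psi^{-1}$ to a common factor of $a$ and $b$ — the desired contradiction.
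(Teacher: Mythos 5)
Your proof is correct. Note, however, that the paper does not actually prove this lemma: it is imported verbatim from the reference \cite{ZurGu}, so there is no in-paper argument to compare against, and what you have supplied is a genuine self-contained proof. Your route --- viewing substitution as the comorphism of the finite surjective map $\psi(\x)=(h_1(x_1),\ldots,h_n(x_n))$, observing that $V(a)\cap V(b)$ has codimension at least two when $\gcd(a,b)=1$, and that a finite morphism cannot raise the dimension of a preimage --- is sound, and your ring-theoretic variant (a height-one prime of $\overline{\K}[\x]$ contracts to a height-one, hence principal, prime of the polynomial subring $R=\overline{\K}[h_1(x_1),\ldots,h_n(x_n)]$, whose generator then divides both $a(\y)$ and $b(\y)$) is arguably the cleanest way to phrase it, since $a(\h),b(\h)$ visibly lie in $R$. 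Two small points deserve to be made explicit rather than waved at: first, the reduction to $\overline{\K}$ uses in an essential way that coprimality in $\K[\x]$ persists under extension of the base field (true, but it is the one place where a reader could balk, and it needs the hypothesis direction $\gcd_{\K}=1\Rightarrow\gcd_{\overline{\K}}=1$, not merely the easy converse); second, the degenerate cases where $a$ or $b$ is a nonzero constant, or where one of them is zero, should be dispatched separately before speaking of $V(a)\cap V(b)$. The key structural facts you rely on --- that $x_i$ is integral over $\overline{\K}[h_i(x_i)]$ because it is a root of $h_i(T)-h_i(x_i)$, and that the $h_i(x_i)$ are algebraically independent so that $R$ is again a polynomial ring and a UFD --- are exactly where the hypothesis that each $h_i$ is a nonconstant polynomial in its own single variable enters, and your argument uses it correctly (the statement is false for general multivariate substitutions).
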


And consequence we have the following:

\begin{lem} \label{primitivo} Let $a(\x,y) \in \K[\x][y]$ be a primitive polynomial, then for every $n+1$-tuple $h_1(x_1), \ldots, h_n(x_n), g(y)$ of non constant one variable polynomials over $\K$ 
$p(h_1,\dots, h_n,g(y)) = p(\h,g(y))$ is also primitive in $\K[\x][y]$.
\end{lem}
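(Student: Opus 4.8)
The plan is to reduce the assertion to a statement purely about contents, and then to treat the two kinds of substitution --- the $h_i$ acting on the $x_i$ and the $g$ acting on $y$ --- by different tools. Write $p(\x,y)=\sum_{i=0}^{d}f_i(\x)y^i$ with $f_i\in\K[\x]$; primitivity of $p$ in $\K[\x][y]$ means precisely that $\gcd(f_0,\ldots,f_d)=1$ in the UFD $\K[\x]$. Since $p(\h,g(y))=\sum_{i=0}^{d}f_i(\h)g(y)^i$, I first isolate the intermediate claim that $\gcd(f_0(\h),\ldots,f_d(\h))=1$, and then check that composing with $g$ keeps the content trivial.

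First I would handle the substitution $x_i\mapsto h_i(x_i)$, i.e. prove $\gcd(f_0(\h),\ldots,f_d(\h))=1$. Lemma \ref{lem:zurgu} gives this for two polynomials, and I would promote it to any finite number by induction on the number of coefficients, with the two--polynomial case as the base. For the inductive step set $e=\gcd(f_1,\ldots,f_d)$ and write $f_i=e f_i'$, so that $f_1',\ldots,f_d'$ is a coprime tuple of fewer polynomials. Using that $\x\mapsto\h$ is a ring homomorphism, evaluation commutes with products, so $\gcd(f_1(\h),\ldots,f_d(\h))=e(\h)\,\gcd(f_1'(\h),\ldots,f_d'(\h))$, and by the induction hypothesis the second factor is a unit; hence $\gcd(f_1(\h),\ldots,f_d(\h))=e(\h)$ up to units. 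Since $\gcd(f_0,e)=1$, a final application of Lemma \ref{lem:zurgu} yields $\gcd(f_0(\h),e(\h))=1$, whence $\gcd(f_0(\h),\ldots,f_d(\h))=1$.

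It then remains to substitute $y\mapsto g(y)$ and see that $\sum_{i=0}^{d}f_i(\h)g(y)^i$ stays primitive. I would use the characterization that an element of $\K[\x][y]$ is primitive exactly when it does not vanish modulo any irreducible $P\in\K[\x]$. Fix such a $P$ and pass to the integral domain $D=\K[\x]/(P)$, into which $\K$ embeds because $P$ is non-constant. As $\gcd(f_0(\h),\ldots,f_d(\h))=1$, some $f_i(\h)$ has nonzero image in $D$; taking $e$ maximal with this property, the reduction $\sum_i\overline{f_i(\h)}\,g(y)^i\in D[y]$ has $y$-leading coefficient $\overline{f_e(\h)}\,\mathrm{lc}(g)^e$, a product of nonzero elements of the domain $D$, hence nonzero, so no top-degree cancellation can occur. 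Thus the reduction is a nonzero element of $D[y]$, i.e. $P$ does not divide the content of $p(\h,g(y))$; since $P$ was an arbitrary irreducible, that content is a unit and $p(\h,g(y))$ is primitive.

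The step I expect to be the real obstacle is the promotion of Lemma \ref{lem:zurgu} from two to several polynomials, because the two--polynomial statement cannot be applied directly when $f_0,\ldots,f_d$ fail to be pairwise coprime; factoring out $e=\gcd(f_1,\ldots,f_d)$ \emph{before} evaluating is the device that makes the induction close. By contrast the $g$--substitution is comparatively harmless once one reduces modulo a prime, the only point needing attention being that $g$ is non-constant, so its leading coefficient survives in $D$ and the leading term of the reduction cannot vanish.
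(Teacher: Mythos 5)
Your proof is correct, but the second half runs along a genuinely different line from the paper's. For the substitution $\x\mapsto\h$ both arguments rest on Lemma \ref{lem:zurgu}; the paper simply invokes that lemma for the whole tuple $p_0,\dots,p_n$ even though it is stated for two polynomials, whereas you supply the missing reduction by factoring out $e=\gcd(f_1,\dots,f_d)$ before evaluating and inducting on the number of coefficients --- a small but genuine gap in the paper that your argument closes. For the substitution $y\mapsto g(y)$ the paper argues by contradiction: it writes $p(\h,g(y))=q(\x)R(\x,y)$, evaluates at $n+1$ points $y_l\in\overline{\K}$ chosen so that the values $a_l=g(y_l)$ are distinct, and inverts the Vandermonde matrix $(a_l^k)$ to conclude that $q(\x)$ divides every $p_i(\h)$, contradicting coprimality. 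You instead reduce modulo an arbitrary irreducible $P\in\K[\x]$ and observe that in the domain $(\K[\x]/(P))[y]$ the top nonvanishing coefficient $\overline{f_e(\h)}\,\mathrm{lc}(g)^e$ cannot be zero, so the image is nonzero and $P$ cannot divide the content. Your route avoids the passage to $\overline{\K}$ and the (unstated in the paper) need for $g$ to take $n+1$ distinct values there, and it isolates exactly where non-constancy of $g$ enters, namely through the survival of its leading coefficient in the quotient domain; the paper's interpolation argument is more explicit and computational but relies on these implicit choices. Both are valid.
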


\begin{proof} 

Let $p(\x,y)=\sum_{i=0}^n p_i(\x)y^i$ with $\gcd(p_0(\x),\dots,p_n(\x))=1$. Then, by previous Lemma \ref{lem:zurgu} $\gcd(p_0(\h),\dots,p_n(\h))=1$, and hence $p(\h,y)=\sum_{i=0}^n p_i(\h)y^i$ is primitive. Now suppose that $p(\h,g(y))$ is not primitive in $\K[\x][y]$. Then there exist a non constant  polynomial $q(\x) \in \K[\x]$ such that $p(\h,g(y))=q(\x)R(\x,y)$, for certain polynomial $R(\x,y) \in \K[\x,y]$. But then, taking $n+1$ different values of $g(y_l)=a_l \in \bar K $, we get $p(\h,a_l)=\sum_{i=0}^n p_i(\h)a_l^i=q(\x)R(\x,y_l)$ for $l=0,\dots, n$ and multiplying by the inverse of the Vandermonde matrix $A=(a_l^k)$ we get

$$p_i(\h)=q(\x)A^{-1}R(\x,y_l)$$

hence $q(\x)$ divides $p_i(\h)$,  for $i=0,\ldots,n$ which is a contradiction.
\end{proof}

Now, we are in the conditions to show this elementary result.

\begin{cor} \label{jorgecoro}
Let   
$a(\x)\in \K[\x]$ be a NHIP polynomial. Then any primitive  polynomial  $p(\x,y)=a(\x)c(\x,y)+b(\x) \in \K[\x][y]$ is  HIP  for any pair $c(\x,y)=\sum_{i=0}^{d}c_i(\x)y^i\in \K[\x,y]$  with $b(\x) \in \K[\x]$  and $\gcd(a(\x),c_d(\x)b(\x))=1$.
\end{cor}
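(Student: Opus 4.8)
The plan is to reduce the statement to Proposition \ref{jorgeproposition} by performing the substitution and checking that its three hypotheses survive. First I would fix an arbitrary $(n+1)$-tuple $h_1(x_1),\ldots,h_n(x_n),g(y)$ of non-constant one-variable polynomials and set
$$P(\x,y):=p(\h,g(y))=a(\h)\,c(\h,g(y))+b(\h).$$
Writing $A(\x)=a(\h)$, $B(\x)=b(\h)$ and $C(\x,y)=c(\h,g(y))\in\K[\x][y]$, the polynomial $P=A\,C+B$ has exactly the shape to which Proposition \ref{jorgeproposition} applies, so it suffices to verify for $P$ that (i) $A$ has an irreducible factor of multiplicity $1$, (ii) $P$ is primitive in $\K[\x][y]$, and (iii) $\gcd(A,\,C_DB)=1$, where $C_D$ denotes the leading coefficient of $C$ as a polynomial in $y$.

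Hypothesis (i) is immediate: the NHIP property of $a$ says precisely that $A(\x)=a(\h)$ has an irreducible factor of multiplicity $1$. Hypothesis (ii) follows from Lemma \ref{primitivo}: since $p(\x,y)$ is primitive by assumption, so is $p(\h,g(y))=P(\x,y)$.

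The step requiring a little care is (iii), because it concerns the leading $y$-coefficient of $C$ rather than of $c$. Writing $C(\x,y)=\sum_{i=0}^{d}c_i(\h)\,g(y)^i$ and letting $e=\deg g\ge 1$ with leading coefficient $\ell\in\K$ nonzero, the term $c_d(\h)g(y)^d$ has $y$-degree $de$, strictly larger than the $y$-degree $ie$ of every lower term; hence $C_D=\ell^d c_d(\h)$, a nonzero constant multiple of $c_d(\h)$. Therefore the relevant product is $C_DB=\ell^d c_d(\h)b(\h)$. Starting from the hypothesis $\gcd(a(\x),c_d(\x)b(\x))=1$ and applying Lemma \ref{lem:zurgu} to the coprime pair $a(\x)$ and $c_d(\x)b(\x)$, I obtain $\gcd(a(\h),c_d(\h)b(\h))=1$, and multiplying by the nonzero constant $\ell^d$ does not change the gcd. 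This gives (iii).

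With (i), (ii) and (iii) in hand, Proposition \ref{jorgeproposition} yields that $P(\x,y)=p(\h,g(y))$ is absolutely irreducible, hence in particular irreducible over $\K$; since the tuple was arbitrary, $p$ is HIP. The only genuine obstacle is the bookkeeping of leading coefficients under the substitution $y\mapsto g(y)$: one must use that $g$ is non-constant so that the top $y$-term of $C$ comes solely from $i=d$, which is exactly what keeps $C_D$ proportional to $c_d(\h)$ and lets Lemma \ref{lem:zurgu} transfer the coprimality condition.
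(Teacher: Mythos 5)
Your proof is correct and follows essentially the same route as the paper's (which simply declares the corollary an immediate consequence of Proposition \ref{jorgeproposition}, Lemma \ref{lem:zurgu}, and Lemma \ref{primitivo}). Your explicit verification that the leading $y$-coefficient of $c(\h,g(y))$ is $\ell^{d}c_d(\h)$, so that Lemma \ref{lem:zurgu} transfers the coprimality hypothesis, is a detail the paper leaves implicit but is exactly the right bookkeeping.
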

\begin{proof}
The proof is an immediate consequence of  Proposition \ref{jorgeproposition} and the above result.  Observe that primitiveness is maintained thanks the above Lemma \ref{primitivo}. 

\end{proof}
Obviously, any HIP polynomial is a NHIP one.  The following interesting result 
 will help us to provide NHIP univariate polynomials and HIP polynomials.

\begin{lem}\label{jorgelema}
Let $f(x) \in \K[x]$  be  a  polynomial  with formal derivative $f'(x)\not=0$, and $a, b\in \K$ with $a\not=b$. Then the polynomial $(f(x)-a)(f(x)-b)$ has at least two simple roots in
 $\overline{\K}$.
\end{lem}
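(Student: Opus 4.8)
The plan is to reduce the statement to counting simple roots of the two coprime factors $f(x)-a$ and $f(x)-b$, and then to bound that count from below using the derivative $f'$. First I would observe that since $a\neq b$ the polynomials $f(x)-a$ and $f(x)-b$ have no common root, so they are coprime; consequently a simple root of either factor is automatically a simple root of the product $(f(x)-a)(f(x)-b)$, and the simple roots of the product are exactly the union of the simple roots of the two factors. Writing $s_a$ and $s_b$ for the numbers of simple roots of $f(x)-a$ and $f(x)-b$ in $\overline{\K}$, it therefore suffices to prove $s_a+s_b\geq 2$. The basic mechanism I would exploit is that a root $\alpha$ of $f(x)-a$ of multiplicity $m$ is a root of $f'=(f(x)-a)'$ of multiplicity at least $m-1$, since $(x-\alpha)^{m-1}$ always divides $f'$; in particular the multiple roots of $f(x)-a$ are precisely its common roots with $f'$, and likewise for $f(x)-b$.

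Next I would set up the degree bookkeeping on $f'$. Let $M_a$ and $M_b$ denote the sets of distinct multiple roots of $f(x)-a$ and of $f(x)-b$; these are disjoint subsets of the zero set of $f'$. For $\alpha\in M_a$ let $m_\alpha\geq 2$ be its multiplicity in $f(x)-a$ and $\mu_\alpha\geq m_\alpha-1\geq 1$ its multiplicity in $f'$, and similarly for $M_b$. Since these are distinct roots of $f'$, the sum of the $\mu$'s over $M_a\cup M_b$ is at most $\deg f'$. Splitting $\deg(f(x)-a)=n$ (where $n=\deg f$) into its simple and multiple roots gives $\sum_{\alpha\in M_a} m_\alpha = n-s_a$, hence $\sum_{\alpha\in M_a}(m_\alpha-1)=n-s_a-|M_a|$, and likewise for $b$. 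Combining these with $\mu_\alpha\geq m_\alpha-1$ yields
\[
2n-(s_a+s_b)-(|M_a|+|M_b|)\ \leq\ \deg f'.
\]
Using in addition that each $\mu\geq 1$ gives $|M_a|+|M_b|\leq \deg f'$, so after substituting I obtain $s_a+s_b\geq 2n-2\deg f'$. Finally, because $f'\neq 0$ we have $\deg f'\leq n-1$, whence $s_a+s_b\geq 2n-2(n-1)=2$, as required. (The example $f=(x^2-1)^2$ with $a=0,\,b=1$ shows the bound is sharp, the two surviving simple roots being $\pm\sqrt{2}$.)

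The only delicate point, and the step I expect to need the most care, is positive characteristic. There the multiplicity of $\alpha$ in $f'$ need not equal $m_\alpha-1$: when $\mathrm{char}\,\K$ divides $m_\alpha$ it can be strictly larger, and $\deg f'$ can drop well below $n-1$. What saves the argument is that I only ever use the one-sided inequality $\mu_\alpha\geq m_\alpha-1$, which is valid in every characteristic, together with the upper bound $\deg f'\leq n-1$; the hypothesis $f'\neq 0$ is exactly what guarantees $\deg f'$ is a well-defined nonnegative integer bounded by $n-1$, and it is precisely this hypothesis that excludes the degenerate cases (such as perfect $p$-th powers in characteristic $p$) where the conclusion would fail. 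In characteristic zero the same computation simplifies, since then $\mu_\alpha=m_\alpha-1$ and $\deg f'=n-1$ hold with equality.
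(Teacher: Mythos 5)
Your proof is correct. The underlying mechanism is the same as the paper's --- bound the contribution of multiple roots to $f'$ and play it off against $\deg f' \le \deg f - 1$ --- but your route through it is genuinely different and in one respect cleaner. The paper differentiates the full product $q(x)=(f(x)-a)(f(x)-b)$, obtains $q'(x)=f'(x)(2f(x)-a-b)$, and must then argue that the ``multiple part'' $h(x)=\prod_i(x-a_i)^{e_i-1}$ of $q$ divides $f'(x)$ via a gcd argument; because the factor $2f(x)-a-b$ degenerates in characteristic $2$ (where it reduces to the nonzero constant $a+b$), the paper needs a separate case there. You never form $q'$ at all: you use coprimality of $f-a$ and $f-b$ to reduce to counting simple roots of each factor separately, together with the elementary fact that an $m$-fold root of $f-c$ is at least an $(m-1)$-fold root of $f'$, which holds in every characteristic. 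Your two inequalities $\sum_\alpha(m_\alpha-1)\le\deg f'$ and $|M_a|+|M_b|\le\deg f'$ are, in effect, a repackaging of the paper's chain $\deg q/2-1\geq\deg f'\geq\deg h=\deg q-d$ combined with its root count, so the bookkeeping is equivalent; what your version buys is a characteristic-uniform argument with no gcd computation and no case split, at the cost of slightly more index-chasing. The sharpness example $f(x)=(x^2-1)^2$ with $a=0$, $b=1$ is a worthwhile addition that the paper does not include.
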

\begin{proof} 
We write $q(x)=(f(x)-a)(f(x)-b)=\displaystyle{\prod_{i=1}^{d}}(x-a_i)^{e_i}$ where $a_i \in   \overline{\K}$, for $i=1,\ldots,d$. And derivating,  we have: $$q'(x)=f'(x)(2f(x)-a-b)=h(x)g(x),$$
  where:
$$h(x)=\prod_{i=1}^{d}(x-a_i)^{e_i-1}, \quad g(x)=\sum_{i=1}^{d}e_i\prod_{i\not=j}(x-a_j)$$
We state that  $h(x)$ divides $f'(x)$.   In order to prove this, we distinguish two  cases, depending on the  characteristic of the field $\K$. 

Now,  if   the characteristic of $\K$ is $2$, then $0\not=(a+b)f'(x)=g(x)h(x)$. Otherwise, $\gcd(h(x),2f(x)-a-b)=1$ and we get $h(x)$ divides $f'(x)$.
 Indeed, note that the roots of $h(x)$ are $a_i$, the roots of $q(x)$, and since  $f(a_i)=a$ or
$f(a_i)=b$. We can not have $2f(a_i)-a-b=0$ since $a\ne b$ by hypothesis.  Then,
$$\frac{\deg q(x)}{2}-1\geq \deg f'(x)\geq \deg h(x)=\deg q(x)-d$$
It implies that $\deg q(x) \leq 2d-2$ and finishes the proof.

\end{proof}
{\bf Remark.} The previous result is an improvement of Lemma 3 of \cite{Ayad}.

\

\noindent The following example illustrates that  the hypothesis $f'(x)\not=0$   can not be omitted.
\begin{exmp} Let $f(x)=x^p \in \F_p[x]$, then
$(f(x)-1)(f(x)+1)=(x^2-1)^p$
\end{exmp}

\

An immediate consequence of Lemma \ref{jorgelema} is:

\begin{cor} If $h(x) \in \K[x]$ has $n>1$ simple roots and $f(x)\in \K[x]$  is so that $f'(x)\ne 0$, then $h(f(x))$ has at least $n$ simple roots in
$\overline{\K}$.
\end{cor}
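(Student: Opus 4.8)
The plan is to deduce the statement from Lemma~\ref{jorgelema} by a double-counting argument carried out over the algebraic closure. I would work in $\overline{\K}$ and let $a_1,\dots,a_n\in\overline{\K}$ be $n$ distinct simple roots of $h(x)$. For each $i$ let $S_i\subseteq\overline{\K}$ be the set of simple roots of $f(x)-a_i$, and put $s_i=|S_i|$. Since the $a_i$ are pairwise distinct, the shifted polynomials $f(x)-a_i$ and $f(x)-a_j$ (for $i\ne j$) have no common root, so the sets $S_i$ are pairwise disjoint. Moreover, writing $h(x)=(x-a_i)\,u_i(x)$ with $u_i(a_i)\ne0$ (possible because $a_i$ is a simple root of $h$), we get $h(f(x))=(f(x)-a_i)\,u_i(f(x))$; hence every $\gamma\in S_i$, being a simple root of $f(x)-a_i$ with $u_i(f(\gamma))=u_i(a_i)\ne0$, is a simple root of $h(f(x))$. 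Consequently it suffices to prove $\sum_{i=1}^n s_i\ge n$, since the disjoint union $\bigsqcup_i S_i$ then supplies at least $n$ distinct simple roots of $h(f(x))$.

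Next I would apply Lemma~\ref{jorgelema} over the field $\overline{\K}$ (which is its own algebraic closure) to each pair $a_i\ne a_j$: the lemma is valid for constants in any field and uses only the hypothesis $f'(x)\ne0$, so it yields that $(f(x)-a_i)(f(x)-a_j)$ has at least two simple roots in $\overline{\K}$. Because $f(x)-a_i$ and $f(x)-a_j$ share no root, the multiplicity of any root of the product equals its multiplicity in the unique factor containing it; thus the simple roots of the product are exactly $S_i\sqcup S_j$, and the lemma gives $s_i+s_j\ge2$ for every $i\ne j$.

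Finally I would combine these pairwise bounds by summing over all $\binom{n}{2}$ pairs. Each index $i$ occurs in exactly $n-1$ pairs, so $\sum_{i<j}(s_i+s_j)=(n-1)\sum_{i=1}^n s_i$, while the pairwise estimates give $\sum_{i<j}(s_i+s_j)\ge 2\binom{n}{2}=n(n-1)$. Dividing by $n-1>0$ (this is where the hypothesis $n>1$ is used) yields $\sum_i s_i\ge n$, completing the argument. I expect the only delicate points to be the two bookkeeping reductions in the first two paragraphs: namely that the roots $a_i$ may lie in $\overline{\K}$ rather than in $\K$, so Lemma~\ref{jorgelema} must be invoked over the algebraic closure, and that the lemma's ``two simple roots of the product'' must be correctly redistributed among the individual factors $f(x)-a_i$ using the disjointness of their root sets. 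The concluding double count is then routine.
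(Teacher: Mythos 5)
Your proof is correct and rests on the same key ingredient as the paper's own argument: Lemma \ref{jorgelema} applied to the pairwise products $(f(x)-a_i)(f(x)-a_j)$, together with the observation that a simple root of $f(x)-a_i$ remains a simple root of $h(f(x))$ because $a_i$ is a simple root of $h$. The only genuine difference is the final bookkeeping. The paper splits into cases: either every factor $f(x)-a_i$ has a simple root, which already gives $n$ distinct ones, or some $f(x)-a_1$ has none, in which case the lemma forces $s_i\ge 2$ for every $i\ne 1$ and one gets $2n-2\ge n$. You instead sum the pairwise bounds $s_i+s_j\ge 2$ over all $\binom{n}{2}$ pairs and divide by $n-1$, obtaining $\sum_i s_i\ge n$ in one stroke; this averaging step avoids the case distinction and uses the hypothesis $n>1$ in exactly the same place. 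Your write-up also makes explicit two points the paper leaves tacit: that the lemma must be invoked over $\overline{\K}$ since the $a_i$ need not lie in $\K$ (harmless, as the lemma's proof only needs $f'\ne 0$ and $a\ne b$), and that the cofactor $u_i(f(x))$ does not destroy simplicity because $u_i(a_i)\ne 0$. Both arguments are complete; yours is marginally cleaner.
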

\begin{proof} Suppose $h$ has $a_i,i=1,\dots, n$ as simple roots and $h(x)=p(x)\prod_{i=1}^n(x-a_i)$ for  $p(x)\in \bar\K[x]$. The result is trivial  if each of the factors 
$f(x)-a_i$ has at least one simple root. Otherwise, suppose $f(x)-a_1$ has no simple roots. Then, applying Lemma \ref{jorgelema} to $(f(x)-a_1)(f(x)-a_i)$ for $i=2,\dots,n$, we conclude that 
$f(x)-a_i$ must have two
simple roots and these simple roots are necessarily all distinct, so in total $h(f(x))$ would have 
$2n-2\geq n$
simple roots.
 \end{proof}

\begin{remark}  Note that in more than one variable, a polynomial can have factors of multiplicity one, 
while the composition could have all the factors with multiplicity bigger than
one, as one can see for example taking the polynomial $p(x_1,\ldots,x_n)=x_1\cdots x_n$ and considering the n-tuple  $\bar h(\x)=(x_1^ 2 ,\ldots, x_n^ 2)$ since then $p(\bar h(\x))=x_1^ 2\ldots x_n^ 2.$
\end{remark}
\

We now show  the main result of this section as consequence of Corollary \ref{jorgecoro} and Lemma \ref{jorgelema}.

\begin{thm}\label{main}  Let $\K$ be a field of characteristic zero and $p(x,y)$ be a primitive bivariate polynomial $\K[x][y]$ given by  $p(x,y) =a(x)c(x,y)+b(x)$ with  $c(x,y)= \sum_{i=0}^{d}c_i(x)y^i \in \K[x][y]$  
such that $\gcd(a(x),c_d(x)b(x))=1$, and
$a(x)$ has at least two simple roots. Then $p(x,y)$ is a HIP polynomial.
\end{thm}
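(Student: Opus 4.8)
The plan is to reduce the theorem to the single assertion that $a(x)$ is a NHIP polynomial, and then invoke Corollary~\ref{jorgecoro}. Indeed, the hypotheses of Theorem~\ref{main} — namely that $p(x,y)=a(x)c(x,y)+b(x)$ is primitive in $\K[x][y]$ and that $\gcd(a(x),c_d(x)b(x))=1$ — are exactly the hypotheses of that corollary. Hence, once we know $a(x)$ is NHIP, the conclusion that $p(x,y)$ is HIP follows at once, with no further work on the $y$-variable structure.

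It therefore remains to verify that $a(x)$ is NHIP, i.e.\ that for every non-constant $h(x)\in\K[x]$ the univariate polynomial $a(h(x))$ has an irreducible factor of multiplicity $1$ over $\overline{\K}$. For a one-variable polynomial this is the same as requiring that $a(h(x))$ possess at least one simple root in $\overline{\K}$. This is where the characteristic-zero hypothesis enters: since $\K$ has characteristic $0$ and $h$ is non-constant, its formal derivative $h'(x)$ is nonzero, so $h$ satisfies the standing hypothesis $f'(x)\neq 0$ required by Lemma~\ref{jorgelema} and by the Corollary that follows it.

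I would then apply that Corollary with $f=h$. By assumption $a(x)$ has at least two simple roots, so taking $n$ to be the number of simple roots of $a$ (so $n\ge 2$), the Corollary following Lemma~\ref{jorgelema} yields that $a(h(x))$ has at least $n\geq 2$ simple roots in $\overline{\K}$; in particular it has a simple root, and therefore an irreducible factor of multiplicity $1$. Since $h$ was an arbitrary non-constant polynomial, $a(x)$ is NHIP, which completes the reduction and hence the proof.

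Because all the genuinely new content has already been packaged into the earlier lemmas, I do not anticipate a serious obstacle here; the step doing the real work is the Corollary after Lemma~\ref{jorgelema}, whose purpose is precisely to guarantee that composing $a$ with an arbitrary $h$ cannot destroy every simple root. The one point to check carefully is the translation between \emph{irreducible factor of multiplicity $1$} in the definition of NHIP and \emph{simple root} for the univariate composition $a(h(x))$, together with the observation that characteristic zero is exactly what makes $h'(x)\neq 0$ automatic, so that Lemma~\ref{jorgelema} applies for every admissible $h$.
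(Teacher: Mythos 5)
Your proof is correct and takes essentially the same approach as the paper, which itself presents the theorem as a consequence of Corollary \ref{jorgecoro} and Lemma \ref{jorgelema}. The only cosmetic difference is that the paper's written proof unfolds the corollary and re-verifies the hypotheses of Proposition \ref{jorgeproposition} directly on $p(f(x),g(y))$, whereas you first establish that $a(x)$ is NHIP (via the corollary following Lemma \ref{jorgelema}) and then cite Corollary \ref{jorgecoro} as a black box.
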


\begin{proof}  First note that since we are in a field of characteristic zero, any non constant polynomial $f(x)$ has non zero derivative. Hence,  if $a(x)=(x-a)(x-b)m(x)$ where  $a,b$ are  two different simple roots, then $a(f(x))=(f(x)-a)(f(x)-b)m(f(x))$ will also have at least two simple roots by  Lemma \ref{jorgelema}. Moreover the polynomial $p(f(x),g(y)) \in\K[x][y]$ is primitive by  Lemma \ref{primitivo}. On the other hand, 
$$p(f(x),g(y)) = A(x)C(x,y)+B(x)$$ for certain polynomials  $C(x,y) = \sum_{i=0}^{D}C_i(x)y^i \in \K[x][y]$, 
$A(x)=a(f(x))$, $B(x)=b(f(x))$ and $C_D(x)= \alpha c_d(f(x))$  with $0 \not = \alpha \in \K$. Since $\gcd(a(x), c_d(x)b(x))=1$ then  by Lemma \ref{lem:zurgu} we have $\gcd(A(x), C_D(x)B(x)) =1$. 
So to get the result, we just need to apply Proposition \ref{jorgeproposition} to the polynomial $p(f(x),g(y))$. 

\end{proof}

{\bf Remark.} The result can not be generalized to fields of positive characteristic in full generality. For example $(x^2-1)y-1$ is not HIP in $\F_p$, since $(x^{2p}-1)y^p-1=((x^2-1)y)^p-1$ is reducible in $\F_p$.

\

As consequence of Theorem \ref{main} and  Corollary \ref{jorgecoro} we are able to construct HIP polynomials  in any arbitrary number
of variables. For example, by Theorem \ref{main} we take a HIP  bivariate polynomial $a(x_1,x_2) \in \K[x_1,x_2]$ then it is NHIP polynomial,  now choosing polynomials $c(x_1,x_2,y), b(x_1,x_2)$ as in Corollary \ref{jorgecoro} we get that  $a(x_1,x_2)c(x_1,x_2,y)+b(x_1,x_2) \in \K[x_1,x_2,y]$  is HIP.

\section{Eisenstein criterion and  some classes of irreducible polynomials in two variables}
In this section we deal  with the irreducibility of certain classes of polynomials in two variables.

\begin{pro}\label{teomain}  Let $\phi(x,y)=(u(x,y),v(x,y))$ be an automorphism of $\K[x,y]$ and
$F(x,y)=uQ(u,v)+r(u)$ be a primitive polynomial such that $r(0)\ne 0$  and $Q(u,v)=\sum_{j=0}^dp_j(u)v^i$  verifies  $u\nmid p_d(u)$. Then $F(x,y)$ is absolutely irreducible.
\end{pro}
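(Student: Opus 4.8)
The plan is to reduce $F$ to a polynomial in the ``standard'' form covered by Proposition \ref{jorgeproposition} by undoing the automorphism $\phi$, and then to invoke that proposition directly. Set
$$G(x,y) := x\,Q(x,y)+r(x) = x\sum_{j=0}^{d}p_j(x)y^j + r(x) \in \K[x][y],$$
obtained from $F$ by formally replacing the pair $(u,v)$ by $(x,y)$. Since $F=uQ(u,v)+r(u)$, we have exactly $F=\phi(G)$, where $\phi$ is applied to $G$ viewed as a $\K$-algebra element.

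The first step is to observe that everything relevant is preserved under $\phi$. Because $\phi$ is a $\K$-algebra automorphism of $\K[x,y]$, the pair $u,v$ is another system of generators, so $\K[u,v]=\K[x,y]$ and $\phi$ extends to a $\overline{\K}$-algebra automorphism of $\overline{\K}[x,y]$. An automorphism carries a factorization of $G$ into a factorization of $F=\phi(G)$ and conversely, so $F$ is absolutely irreducible if and only if $G$ is. Likewise, viewing $F$ as an element of $\K[u][v]$ is the same, under the isomorphism $x\mapsto u,\ y\mapsto v$, as viewing $G$ in $\K[x][y]$; hence the primitivity of $F$ is exactly the primitivity of $G$. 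Thus it suffices to prove that $G$ is absolutely irreducible.

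For this I would apply Proposition \ref{jorgeproposition} to $G$ with the choice $a(x)=x$, $c(x,y)=Q(x,y)=\sum_{j=0}^{d}p_j(x)y^j$ and $b(x)=r(x)$, so that the leading coefficient $c_d(x)$ equals $p_d(x)$. The hypotheses are checked one by one: $a(x)=x$ is irreducible of multiplicity $1$; the condition $u\nmid p_d(u)$ gives $p_d(0)\ne 0$ and the assumption $r(0)\ne 0$ gives $x\nmid r(x)$, whence $\gcd\bigl(x,\,p_d(x)r(x)\bigr)=1$; and $G$ is primitive by the previous step. Proposition \ref{jorgeproposition} then yields that $G$, and therefore $F=\phi(G)$, is absolutely irreducible.

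The only delicate point is the first step, namely the transfer of both primitivity and absolute irreducibility across $\phi$. This is where the hypothesis that $\phi$ is an \emph{automorphism} (rather than an arbitrary endomorphism) is essential: invertibility guarantees $\K[u,v]=\K[x,y]$, so that the substitution $(u,v)\mapsto(x,y)$ is a genuine ring isomorphism under which content, factorizations, and irreducibility all correspond. Once this identification is in place, the conclusion is an immediate appeal to Proposition \ref{jorgeproposition}.
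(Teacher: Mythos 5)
Your proof is correct and follows essentially the same route as the paper: both arguments use the automorphism to identify $\K[x,y]$ with $\K[u,v]$ and then apply Eisenstein's criterion with the prime $u$ (you do this by invoking Proposition \ref{jorgeproposition} with $a(x)=x$, which is exactly that Eisenstein application packaged). Your write-up is in fact somewhat more careful than the paper's about why primitivity and irreducibility transfer across $\phi$, but the underlying idea is identical.
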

\begin{proof} 
Just apply Eisenstein criterion with the prime $u\in \overline \K[u]$ to obtain the irreducibility of $F(x,y)=\tilde F(u,v)$ in the ring $\overline \K[u][v]=\overline \K[u,v]=\overline \K[x,y]$. Note that $\tilde F(u,v)\in \K[u][v]$ is primitive, and $r(0)\ne 0$ implies $u\nmid r(u)$ by Lemma \ref{lem:zurgu}.

\end{proof} 

\begin{remark} The previous result somehow gives us a generalization of Eisenstein criterion since it allows us to consider primes in the full ring $\K[x,y]$ in which we want to know irreducibility of certain polynomial. 
Somehow is like proving that the polynomial $a+bx$ is irreducible over $\K[x]$ by noticing that  $x$, prime in $\K[x]$, divides all the terms except the first, and $x^2$ does not divide the leading term.
\end{remark}
 We include now other less obvious  applications of the result. The first is a generalization of Proposition 1.5 in \cite{jorge}.

\begin{thm}  Let  $f_1(t), f_2(t) \in \K[t] $ two distinct polynomials of respective degree $d_1>0$ and $d_2 > 0$,  $0\ne r \in \K $ and $F(x,y)=f_1(x)f_2(y)-f_2(x)f_1(y)+r$. We have, 
\begin{itemize}
\item If char $\K =0 $ then   $F(x,y)$ is absolutely irreducible.
\item If char $\K=p$ and $p\nmid d_1-d_2$, then  $F(x,y)$ is absolutely irreducible.
\end{itemize} 
\end{thm}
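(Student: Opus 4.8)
The plan is to apply Proposition \ref{teomain}, so the whole game is to exhibit an automorphism $\phi(x,y)=(u(x,y),v(x,y))$ of $\K[x,y]$ in which the near-separated polynomial $F(x,y)=f_1(x)f_2(y)-f_2(x)f_1(y)+r$ takes the shape $u\,Q(u,v)+r(u)$ with $r(0)\neq 0$ and $u\nmid p_d(u)$. The key observation is that a near-separated polynomial vanishes on the diagonal $x=y$, since $F(x,x)-r=f_1(x)f_2(x)-f_2(x)f_1(x)=0$. This suggests choosing $u=y-x$ (which is divisible precisely on the diagonal) as the ``prime'' coordinate, and completing it to an automorphism, e.g. $v=x$, so that $\phi(x,y)=(y-x,x)$ is a (triangular, hence invertible) automorphism of $\K[x,y]$ with inverse sending $x\mapsto v$, $y\mapsto u+v$.

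First I would substitute $x=v$ and $y=u+v$ into $G(x,y):=f_1(x)f_2(y)-f_2(x)f_1(y)=F(x,y)-r$ and expand $\tilde G(u,v)=f_1(v)f_2(u+v)-f_2(v)f_1(u+v)$ as a polynomial in $u$. The constant term (the $u=0$ coefficient) is $f_1(v)f_2(v)-f_2(v)f_1(v)=0$, confirming $u\mid \tilde G$, so I can write $\tilde G(u,v)=u\,Q(u,v)$ and hence $\tilde F(u,v)=u\,Q(u,v)+r$ with $r(u)\equiv r$ constant and $r(0)=r\neq 0$. Next I must check the nonvanishing-of-leading-coefficient condition $u\nmid p_d(u)$, equivalently that $\tilde G(u,v)$, viewed as a polynomial in $v$ with coefficients in $\K[u]$, has its top $v$-coefficient not divisible by $u$ — this is exactly where the degree and characteristic hypotheses enter. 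I would compute the coefficient of the highest power of $v$ in $\tilde G(u,v)$: writing $f_i(t)=\lambda_i t^{d_i}+\cdots$, the product $f_1(v)f_2(u+v)$ contributes, via the binomial expansion of $(u+v)^{d_2}$, a term whose $v$-degree and $u$-linear part must be tracked, and the antisymmetric combination kills the pure $v^{d_1+d_2}$ term, leaving a leading behaviour governed by $d_1-d_2$ times something like $\lambda_1\lambda_2$.

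The hard part, and the crux of the argument, will be this leading-coefficient computation: one must show that the coefficient of the top surviving power of $v$ in $\tilde G(u,v)$ is a nonzero constant plus higher-order-in-$u$ terms, and pin down that the relevant nonzero scalar is a multiple of $(d_1-d_2)\lambda_1\lambda_2$. In characteristic zero this scalar is automatically nonzero since $d_1,d_2>0$, $\lambda_1,\lambda_2\neq 0$, and $d_1\neq d_2$ may or may not hold — so I would expect the genuinely delicate point is identifying the correct surviving monomial when $d_1=d_2$ versus $d_1\neq d_2$, and verifying that $p\nmid d_1-d_2$ is precisely the condition guaranteeing the scalar survives reduction mod $p$. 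Once $u\nmid p_d(u)$ is established, absolute irreducibility follows immediately from Proposition \ref{teomain}, and I would note that primitivity of $\tilde F(u,v)$ holds because its constant-in-$v$ part already contains the unit-free term $r$ together with $u$-divisible contributions, so no non-unit of $\K[u]$ divides all coefficients.
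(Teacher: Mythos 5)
Your strategy coincides with the paper's: pass to the coordinates $u=y-x$, $v=x$ (the paper uses $u=x-y$, $v=y$, the same up to sign), observe that $u$ divides $F-r$ because $F-r$ vanishes on the diagonal, and verify the hypothesis of Proposition \ref{teomain} by showing that the top $v$-coefficient of $Q(u,v)$ is the constant $\pm(d_1-d_2)\lambda_1\lambda_2$ coming from the Wronskian $f_1'f_2-f_1f_2'$, where $\lambda_i$ is the leading coefficient of $f_i$. For $d_1\ne d_2$ (and $p\nmid d_1-d_2$ in positive characteristic) this is exactly the paper's computation and your sketch of it is on the right track.

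The genuine gap is the case $d_1=d_2$ in characteristic zero, which the theorem permits and which you explicitly leave unresolved. There the scalar $(d_1-d_2)\lambda_1\lambda_2$ that you intend to show nonzero is identically zero: the Wronskian then has $v$-degree strictly below $d_1+d_2-1$, and the coefficient of the new top power of $v$ in $Q(u,v)$ may receive contributions from the higher-order terms $H(u,v)$, which carry positive powers of $u$; so the condition $u\nmid p_d(u)$ can no longer be read off from the leading coefficients, and your argument stalls. The paper disposes of this with a one-line reduction that your proposal is missing: $F(x,y)$ is unchanged if $f_1$ is replaced by $f_1-af_2$, so when $d_1=d_2$ one subtracts the appropriate multiple of $f_2$ to lower $\deg f_1$ and reduces to the case of distinct degrees, after which the leading-coefficient computation goes through. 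Without that reduction (or some substitute analysis when $d_1=d_2$) the characteristic-zero statement is not proved. A secondary quibble: your primitivity check is too quick, since the constant-in-$v$ coefficient of $\tilde F$ is $up_0(u)+r$ rather than the unit $r$, so "contains the term $r$" does not by itself exclude a common nonconstant factor of all the $v$-coefficients.
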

\begin{proof} 

Observe that we can suppose $d_1\ne d_2$ since otherwise, if $d_1=d_2$, then 
$f_1(t)=af_2(t)+h(t)$ with deg$(h(t))<$deg$(f_2(t))$ and $a\in \K$, and 
$$
F(x,y)=(af_2(x)+h(x))f_2(y)-(af_2(y)+h(y))f_2(x)+r=h(x)f_2(y)-h(y)f_2(x)+r.
$$
 We write $f_l(t) =\sum_{i=0}^{d_l}a_{j,l}t^j \, (j =1, 2)$, and taking  $u=x-y$, $v=y$. Then, by Taylor expansion,  (by abuse of notation we will use $\frac{f_l^{(j)}(y)}{j!}$
for the $j$-th hasse derivative in positive characteristic), we have

$$
f_l(u+v)=\sum_{j=0}^{d_l}a_{j,l}(u+v)^j=\sum_{j=0}^{d_l}a_{j,l}\sum_{i=0}^j\binom{j}{i}u^iv^{j-i}=\sum_{i=0}^{d_l}\left(\sum_{j=i}^{d_l}a_{j,l}\binom{j}{i}v^{j-i}\right)u^i
$$
and so $F(x,y)$ can be written as a polynomial in the variables $u,v$ as 
$$
F(x,y)=\sum_{i=0}^{d_1}\left(\sum_{j=i}^{d_1}a_{j,1}\binom{j}{i}v^{j-i}\right)u^if_2(v)-\sum_{i=0}^{d_2}\left(\sum_{j=i}^{d_2}a_{j,2}\binom{j}{i}v^{j-i}\right)u^if_1(v)+r.
$$
Now observe that deg${_v}\left(\sum_{j=i}^{d_l}a_{j,l}\binom{j}{i}v^{j-i}\right)u^if_m(v)\le d_1+d_2-i$, and so we can write 
$$
F(x,y)=u(f_1'(v)f_2(v)-f_1(v)f_2'(v))+uH(v,u)+r
$$
with deg${_v}H(u,v)<d_1+d_2-1$, and 
$$
f_1'(v)f_2(v)-f_1(v)f_2'(v)=(d_1-d_2) c y^{d_1+d_2-1}+R(y)
$$ 
 for some $c\ne 0$ and  deg$(R(y))<d_1+d_2-1$. The result follows from  Proposition \ref{teomain} by choosing the automorphism $\phi=(u,v)$.

\end{proof}

\

Another  trivial application of Proposition \ref{teomain} gives the following result.   We will use the following notation: we let $H(x,y)=\sum_{i=0}^da_i(x)y^i$ be a polynomial in $\K[x,y]$ with  coefficients 
$a_i(x)=\sum_{j=0}^{d_i}a_{ij}x^j$ and consider $f(x)=\sum_{j=0}^{d_f}f_jx^j$. We will let $M=\{0\le i\le d\,:\,d_i+id_f=m\}$, where $m=\max \{d_j+jd_f\}$.

\begin{cor} \label{corauto}   Let $H(x,y)$  and $f(x)$ as above and consider   $F(x,y)=(y-f(x))H(x,y)+r$ with $0\ne r \in \K$. Then if $\sum_{i\in M} a_{id_i}(f_{d_f})^i\ne 0$, then $F(x,y)$ is absolutely irreducible.
\end{cor}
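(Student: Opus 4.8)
The plan is to reduce the statement to Proposition \ref{teomain} by means of a triangular change of variables. The natural choice is the automorphism $\phi=(u,v)$ of $\K[x,y]$ given by $u=y-f(x)$ and $v=x$; it is an automorphism because it has the polynomial inverse $x=v$, $y=u+f(v)$. Under $\phi$ the factor $y-f(x)$ becomes simply $u$, so I would write
$$
\tilde F(u,v)=u\,\tilde Q(u,v)+r,\qquad \tilde Q(u,v):=H\bigl(v,\,u+f(v)\bigr),
$$
which has exactly the shape $uQ(u,v)+r(u)$ demanded by Proposition \ref{teomain}, the constant polynomial $r(u)=r$ satisfying $r(0)=r\ne 0$.

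The core of the argument is to identify the leading coefficient of $\tilde Q$ viewed in $\K[u][v]$. Expanding $\tilde Q=\sum_{i=0}^{d}a_i(v)\,(u+f(v))^i$ and using $\deg_v a_i=d_i$ together with $\deg_v (u+f(v))^i=i\,d_f$, each summand has $v$-degree at most $d_i+i\,d_f$, whence $\deg_v\tilde Q\le m$. To read off the coefficient of $v^m$ I would note that only the indices $i\in M$ can contribute, and that among the terms of $(u+f(v))^i$ only the $u$-free part $f(v)^i$ reaches $v$-degree $i\,d_f$: every term carrying a factor of $u$ has strictly smaller $v$-degree (when $d_f\ge 1$). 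Therefore the coefficient of $v^m$ in $\tilde Q$ is the \emph{constant} $\gamma:=\sum_{i\in M}a_{id_i}(f_{d_f})^i$, which is nonzero precisely by hypothesis. In the language of Proposition \ref{teomain} this reads $p_m(u)=\gamma$, so that $u\nmid p_m(u)$.

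It then remains to verify that $\tilde F=u\tilde Q+r$ is primitive in $\K[u][v]$, and here the computation above pays off: the coefficient of $v^m$ in $\tilde F$ is the monomial $u\gamma$, so the content (the gcd in $\K[u]$ of all $v$-coefficients) divides $u\gamma$ and hence is, up to a unit, either $1$ or $u$. But the $v^0$-coefficient of $\tilde F$ equals $r$ modulo $u$ and so is not divisible by $u$, which forces the content to be a unit. Having checked primitivity, $r(0)\ne 0$, and $u\nmid p_m(u)$, I would invoke Proposition \ref{teomain} directly to conclude that $F$ is absolutely irreducible.

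The one step that requires genuine care is the degree bookkeeping for the coefficient of $v^m$: one must be certain that no $u$-dependent contribution enters it, since it is exactly this that makes the hypothesis $\gamma\ne 0$ both necessary and sufficient. That it is necessary is visible already from $H(x,y)=y-x$, $f(x)=x$, where the sum vanishes and $F=(y-x)^2+r$ is reducible. The degenerate case $d_f=0$ (with $f$ constant) would be handled separately: there $p_m(u)$ need not be constant, but its nonvanishing at $u=0$ still yields $u\nmid p_m(u)$, and a direct inspection of the content gives primitivity and the same conclusion.
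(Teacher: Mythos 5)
Your proof follows essentially the same route as the paper: the same change of variables $u=y-f(x)$, $v=x$, the same identification of the coefficient of $v^m$ in $H(v,u+f(v))$ as the constant $\sum_{i\in M}a_{id_i}(f_{d_f})^i$ (the paper packages this computation as a Taylor expansion in $y$ about $f(x)$ rather than a binomial expansion, but it is the same degree bookkeeping), and the same appeal to Proposition \ref{teomain}. You are in fact slightly more careful than the paper, which does not explicitly verify primitivity of $\tilde F$ in $\K[u][v]$ nor mention the degenerate case $d_f=0$.
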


\begin{proof}
We observe that  $H(x,f(x)+u)= \sum_{j=0}^d\frac{\partial^jH(x,f(x))}{\partial y^j}\frac{u^j}{j!}$ and
 $$
\deg{\frac{\partial^jH(x,f(x))}{\partial y^j}}\le  \max_{i\le j}\{d_i+(i-j)d_f\}.
 $$
Hence, 
$
\deg_x{H(x,f(x)+u)}\le \deg{H(x,f(x))}=\max_{i\ge j}\{d_i+id_f\}
$
and  we have $H(x,f(x)+h)=\left(\sum_{i\in M} a_{id_i}(f_{d_f})^i\right)x^m+R(x)$ with $\deg{R(x)}<m.$ The result follows by applying Proposition \ref{teomain}  with $u = y-f(x)$ and $v = x$. 

\end{proof}

 It would be interesting, from the computational point  of view, to recognize whether a given polynomial $F(x,y)$ is of the form considered in Corollary \ref{corauto}.  It comes the following question:

\begin{prob}  Given $F(x,y)\in \K[x,y]$, determine wether it exists a polynomial $f(x)\in \K[x]$ and a constant $r\in \K$ such that  $F(x,y)=(y-f(x))H(x,y)+r$ for some polynomial $H(x,y)$.
\end{prob}

Observe that this is equivalent to find $f(x)\in \K[x]$ such that $F(x,f(x))$ is constant.
 
 \
 
 We intend to address this  in a future work, but here we would like to include a simple example. Suppose 
$F(x,y)=\sum_{i=0}^db_i(x)y^i$ for some polynomials $b_i(x)\in K[x]$. Note that, in case that there are $f(x)$ and $r$ in the above conditions, then by substituting $f(x)=\sum_{i=0}^df_ix^i$ into $F(x,y)$ we get
$$
r=F(x,f(x))=\sum_{i=0}^db_i(x)f(x)^i=b_0(x)+f(x)G(x),
$$
for some $G(x)\in \K[x]$. In particular, $f(x)$ divides $b_0(x)-r$, and $\deg{f(x)}\le \deg{b_0(x)}$. 
Consider $F(x,y)=-x^5-x^3y^2+x^2y+y^3+1$.  First we write it as a polynomial in $\K[x][y]$ as $F(x,y)=y^3-x^3y^2+x^2y-x^5+1$. So $\deg{f(x)}\le 5$. Now, substituting $f(x)=\sum_{i=0}^df_ix^i$ into $F(x,y)$ we see, using SAGE,  that the only possible solutions are $f(x)=x^3$ or $f(x)=\pm ix$, and $r=1$, and we get the expression $F(x,y)=(y-x^3)(y^2+x^2)+1$. 

\

It is interesting to observe that there  could be more that one expression for $F(x,y)=(y-f(x))H(x,y)+r$ with the same constant $r$ as it is the case for example for the polynomial  $F(x,y)=\prod_{i=1}^n(y-f_i(x))H(x,y)+r$. However, if $(y-f_1(x))H_1(x,y)+r_1=(y-f_2(x))H_2(x,y)+r_2$ with $r_1\ne r_2$, then $f_1(x)=f_2(x)+r$ for some constant $r$. Indeed, substituting $y$ by $f_1(x)$ inrthe previous identity  gives the desired result. 

\

The polynomials in Corollary \ref{corauto} have already been considered in the literature.
In \cite{mignotte} the authors consider the polynomial $F(x,y)=h(x)\prod_{i=1}^n(y-f_i(x))+g(x)$ and they are interested on the irreducibility over the rational function field  $\K(x)$ with certain conditions on the degree of $f_i(x)$. We can drop all those conditions in two different ways:

\begin{cor} \label{cormignote} Let $F(x,y)$  be the polynomial defined as above, with distinct $n$ polynomials $f_i(x)$ and such that  with $\sum_{i=1}^n$deg($f_i(x))>0$.

\begin{itemize}

\item[a)] If $h(x)$ has a simple root $\alpha\in\overline \K$ and $\gcd (h(x),g(x))=1$ then $F(x,y)$ is absolutely irreducible.

\item[b)] If $0\not= g(x)=r\in \K$ and $f_n(x)-f_i(x))\notin\K$, for  $i=1,\dots,n-1$,  then $F(x,y)$ is absolutely irreducible.
\end{itemize}
\end{cor}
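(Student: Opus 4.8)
The plan is to deduce both parts from Corollary~\ref{corauto} by recognizing $F(x,y)=h(x)\prod_{i=1}^n(y-f_i(x))+g(x)$ in the right form. The key observation is that the product $\prod_{i=1}^n(y-f_i(x))$ is monic of degree $n$ in $y$, so after multiplying by $h(x)$ we can always factor out one linear factor $(y-f_n(x))$ and absorb the remaining data into an auxiliary polynomial $H(x,y)$. Concretely, I would write
\[
F(x,y)=(y-f_n(x))\,H(x,y)+g(x),\qquad H(x,y)=h(x)\prod_{i=1}^{n-1}(y-f_i(x)),
\]
which is exactly the shape $(y-f(x))H(x,y)+r$ demanded by Corollary~\ref{corauto} once $g(x)$ is a nonzero constant, and the slightly more general shape handled by Proposition~\ref{teomain} when $g(x)$ is a genuine polynomial.

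For part (b), where $g(x)=r\in\K^{\times}$, I would apply Corollary~\ref{corauto} directly with $f=f_n$. The nondegeneracy hypothesis $\sum_{i\in M}a_{id_i}(f_{d_f})^i\ne 0$ must be verified, and here is where the assumption $f_n(x)-f_i(x)\notin\K$ enters: it guarantees that each factor $(y-f_i(x))$ contributes a strictly positive $x$-degree when $y$ is specialized along $f_n$, so no catastrophic cancellation of the top coefficient can occur. More precisely, since $H(x,f_n(x)+u)=h(x)\prod_{i=1}^{n-1}(f_n(x)-f_i(x)+u)$, the leading term in $x$ comes from the product $h(x)\prod_{i=1}^{n-1}(f_n(x)-f_i(x))$, and this is a nonzero polynomial precisely because each difference $f_n-f_i$ is nonconstant (hence nonzero) and $h(x)\ne 0$. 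Therefore the relevant coefficient sum is nonzero and Corollary~\ref{corauto} applies.

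For part (a), the hypotheses are different: $h(x)$ has a simple root $\alpha$ and $\gcd(h(x),g(x))=1$, while $g(x)$ need not be constant. Here I would instead invoke Proposition~\ref{teomain} (or equivalently Proposition~\ref{jorgeproposition}), using the prime $(x-\alpha)$ rather than the automorphism machinery. Writing $F(x,y)=h(x)C(x,y)+g(x)$ with $C(x,y)=\prod_{i=1}^n(y-f_i(x))$, the polynomial plays the role of $a(x)c(x,y)+b(x)$ from Proposition~\ref{jorgeproposition}: the factor $h(x)$ has an irreducible factor $(x-\alpha)$ of multiplicity one, the leading coefficient of $C$ in $y$ is $1$, so $\gcd(h(x),c_d(x)\cdot g(x))=\gcd(h(x),g(x))=1$ as required, and primitivity follows since $C$ is monic in $y$. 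Thus Eisenstein at $(x-\alpha)$ yields absolute irreducibility.

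The main obstacle will be the degree bookkeeping in part (b): one must check carefully that applying Corollary~\ref{corauto} with $f=f_n$ really does produce the nonvanishing leading coefficient, and that the stated hypothesis $f_n(x)-f_i(x)\notin\K$ is both necessary and sufficient for this. The subtlety is that the set $M$ in Corollary~\ref{corauto} could in principle select indices whose contributions cancel; I would want to argue that because $H(x,f_n(x)+u)$ factors as a product of nonconstant polynomials in $x$ (times the unit-in-$x$ variable $u$), its top $x$-coefficient is a product of leading coefficients and cannot vanish. Establishing this cleanly—rather than wading through the double sum defining $M$—is the crux of the argument, and I would present it as the factored identity $H(x,f_n(x)+u)=h(x)\prod_{i=1}^{n-1}(f_n(x)-f_i(x)+u)$ together with a one-line degree comparison.
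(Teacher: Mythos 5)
Your argument is essentially the paper's: part (a) is the same one-line reduction to Proposition \ref{jorgeproposition} via the multiplicity-one factor $(x-\alpha)$ of $h$, and part (b) is the same substitution $u=y-f_n(x)$, $v=x$ followed by Eisenstein at the prime $u$, the key point in both treatments being that nonconstancy of each $g_i=f_n-f_i$ forces the $u^0$-coefficient $h(x)\prod_{i<n}(f_n-f_i)$ to dominate strictly in $x$-degree, so the leading $x$-coefficient of $H(x,f_n(x)+u)$ is a nonzero constant and in particular is not divisible by $u$. Your factored identity is a cleaner packaging of the paper's expansion into elementary symmetric sums; both are the same computation. Two caveats. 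First, your framing of (b) as a direct application of Corollary \ref{corauto} is not literally correct: the hypothesis $\sum_{i\in M}a_{id_i}(f_{d_f})^i\ne 0$ of that corollary refers to the coefficient of $x^m$ for the a priori maximum $m=\max_j\{d_j+jd_f\}$, and this sum can vanish under the hypotheses of (b). For instance, with $F=(y-x^3)(y-x^3-x)+r$ one takes $f=f_n=x^3$, $H=y-x^3-x$, so $m=3$, $M=\{0,1\}$ and the sum is $-1+1=0$, even though $F$ is absolutely irreducible (indeed $F=u^2+ux+r$ passes Eisenstein at $u$). Nonvanishing of the polynomial $h\prod(f_n-f_i)$ does not give nonvanishing of the coefficient of $x^m$; what your degree comparison actually verifies is the hypothesis of Proposition \ref{teomain}, which is the statement you should cite — you flagged exactly this subtlety in your last paragraph, and resolving it as you propose reproduces the paper's proof. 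Second, in (a) primitivity of $F$ in $\K[x][y]$ does not follow from $C$ being monic in $y$ (the leading $y$-coefficient of $F$ is $h(x)$, not $1$); it follows because a common divisor of the coefficients would divide both $h$ and $g$, contradicting $\gcd(h,g)=1$.
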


\begin{proof}   a) is an immediate consequence of Proposition \ref{jorgeproposition}.

To prove b) we will apply Proposition \ref{teomain}  with $u=y-f_n(x),v=x$.  Indeed,

$$
F(x,y)=uh(x)\prod_{i=1}^{n-1}(u-g_i(x))+r,
$$
 where $g_i(x)=f_i(x)-f_n(x)$. Then
 $$
 \prod_{i=1}^{n-1}(u-g_i(x))=\sum_{j=0}^{n-1}p_j(x)u^j,
 $$
 where 
 $$
 p_j(x)=(-1)^{n-1-j}\sum_{1\le i_1<\dots<i_{n-1-j}\le n-1}\prod_{l=1}^{n-1-j}g_{i_l},
 $$

 in particular deg$(p_j(x))<$deg$(p_0(x))$ since 
 $$
 \left(\prod_{l=1}^{n-1-j}g_{i_l}\right)\left|\left(\prod_{l=1}^{n-1}g_{l}\right)\right.
 $$ for any set of sudindexes, and  $f_n(x)-f_i(x)\notin\K$ (for  $i=1,\dots, n-1$), 
 and hence $F(x,y)$ is primitive in $\K[u][x]$  and $u\nmid p_d(u)$.
  \end{proof}

{\bf Remark.} Observe that b) is also considered in \cite{mignotte}.

\end{document}